\newtheorem*{rep@theorem}{\rep@title}
\newcommand{\newreptheorem}[2]{%
\newenvironment{rep#1}[1]{%
 \def\rep@title{#2 \ref{##1}}%
 \begin{rep@theorem}}%
 {\end{rep@theorem}}}
\begin{document}

\title{Size of Sets with Small Sensitivity: a Generalization of Simon's Lemma\thanks{
The research leading to these results has received funding from the European Union Seventh Framework Programme (FP7/2007-2013) under projects QALGO (Grant Agreement No. 600700) and RAQUEL (Grant Agreement No. 323970)
and ERC Advanced Grant MQC. 
Part of this work was done while Andris Ambainis was
visiting Institute for Advanced Study, Princeton, supported by National Science 
Foundation under agreement No. DMS-1128155. Any opinions, findings and conclusions or recommendations expressed in this material are those of the author(s) and 
do not necessarily reflect the views of the National Science Foundation.}}
\author{Andris Ambainis \and Jevg\={e}nijs Vihrovs}
\institute{Faculty of Computing, University of Latvia, Rai\c{n}a bulv. 19, R\=\i ga, LV-1586, Latvia}

\date{}

\maketitle

\begin{abstract}
We study the structure of sets $S\subseteq\{0, 1\}^n$ with small sensitivity. The well-known Simon's lemma says that any $S\subseteq\{0, 1\}^n$
of sensitivity $s$ must be of size at least $2^{n-s}$. This result has been useful for proving lower bounds on the sensitivity of Boolean functions,
with applications to the theory of parallel computing and the ``sensitivity vs. block sensitivity'' conjecture.

In this paper we take a deeper look at the size of such sets and their structure. We show an unexpected ``gap theorem'': if $S\subseteq\{0, 1\}^n$ has sensitivity $s$, then we either have $|S|=2^{n-s}$ or $|S|\geq \frac{3}{2} 2^{n-s}$. 

This provides new insights into the structure of low sensitivity subsets of the Boolean hypercube $\{0, 1\}^n$.
\end{abstract}

\section{Introduction}

The complexity of computing Boolean functions (for example, in the decision tree model of computation) is related to a number of combinatorial quantities, such as
the sensitivity and block sensitivity of the function, its certificate complexity and the degree of polynomials that represent the function exactly or approximately \cite{Buhrman_deWolf_2002}.
Study of these quantities has resulted in both interesting results and longstanding open problems.

For example, it has been shown that decision tree complexity in either a deterministic, a probabilistic or a quantum model of computation is polynomially related to a number
of these quantities: certificate complexity, block sensitivity and the minimum degree of polynomials that represent or approximate $f$ \cite{Nisan_Szegedy_1994,Beals+_2001}. This result, in turn, implies that deterministic, 
probabilistic and quantum decision tree complexities are polynomially related --- which is very interesting because a similar result is not known in the Turing
machine world; and, for deterministic vs. quantum complexity, is most likely false because of Shor's factoring algorithm.

The question about the relation between the sensitivity of a function and the other quantities is, however, a longstanding open problem,
known as the ``sensitivity vs. block sensitivity'' question. Since the other quantities are all polynomially related, showing a polynomial relation between sensitivity
and any one of them would imply a polynomial relation between sensitivity and all of them.
This question, since first being posed by Nisan in 1991 \cite{Nisan_1991}, has attracted much attention but there has been quite little progress
and the gap between the best upper and lower bounds remains huge. The examples that achieve the asymptotically biggest separation between the two quantities give $bs(f)=\Omega(s^2(f))$ \cite{Ambainis_Sun_2011,Rubinstein_1995,Virza_2011}, while the best upper bound on $bs(f)$ in terms of $s(f)$ is exponential: $bs(f) \leq s(f) 2^{s(f)-1}$ \cite{A+,Kenyon_Kutin_2004}. Here $bs(f)$ and $s(f)$ denote the block sensitivity and the sensitivity of $f$, respectively.

In this paper we study the following question: assume that a subset $S$ of the Boolean hypercube $\{0, 1\}^n$ has low sensitivity: that is, for every $x\in S$ there are at most
$s$ indices $i\in\{1, \ldots, n\}$ such that changing $x_i$ to the opposite value results in $y\notin S$. What can we say about this set?

Most of the upper bounds on $bs(f)$ in terms of $s(f)$ are based on Simon's lemma \cite{Simon_1983}.
We say that a subset $S$ of the Boolean hypercube $\{0, 1\}^n$ has sensitivity $s$ if, for every $x\in S$, there are at most
$s$ indices $i\in\{1, \ldots, n\}$ such that changing $x_i$ to the opposite value results in $y\notin S$.
Simon's lemma \cite{Simon_1983} says that any $S\subset \{0, 1\}^n$ with sensitivity $s$ must contain at least $2^{n-s}$ 
input vectors $x\in S$. 

Simon \cite{Simon_1983} then used this result to show that 
$s(f)\geq\frac{1}{2} \log_2 n - \frac{1}{2} \log_2 \log_2 n + \frac{1}{2}$ 
for any Boolean function that depends on $n$ variables. Since $bs(f)\leq n$, 
this implies $bs(f)\leq s(f) 4^{s(f)}$. This was the first upper bound on $bs(f)$ in terms of $s(f)$. 
A more recent upper bound of $bs(f) \leq s(f) 2^{s(f)-1}$ by Ambainis et al. \cite{A+} is also based on Simon's lemma. 
If it was possible to improve Simon's lemma, this would result in better bounds on $bs(f)$.

However, Simon's lemma is known to be exactly optimal. 
Let $S$ be a subcube of the hypercube $\{0, 1\}^n$ obtained by fixing $s$ of variables $x_i$. That is, $S$ is the set of all $x=(x_1, \ldots, x_n)$ that
satisfy $x_{i_1}=a_1$, $\ldots$, $x_{i_s}=a_s$ for some choice of distinct $i_1, \ldots, i_k\in\{1, \ldots, n\}$ and 
$a_1, \ldots, a_s\in\{0, 1\}$. Then every $x\in S$ is sensitive to changing $s$ bits $x_{i_1}, \ldots, x_{i_k}$ and 
$|S|=2^{n-s}$.

In this paper, we discover a direction in which Simon's lemma can be improved!
Namely, we show that any $S$ with sensitivity $s$ that is not a subcube must be substantially larger. To do that, we
study the structure of sets $S$ with sensitivity $s$ by classifying them into two types:
\begin{enumerate}
\item
sets $S$ that are contained in a subcube $S'\subset \{0, 1\}^n$ obtained by fixing one or more of values $x_i$;
\item
sets $S$ that are not contained in any such subcube.
\end{enumerate}

There is one-to-one correspondence between the sets of the first type and low-sensitivity subsets of $\{0, 1\}^{n-k}$ for $k\in\{1, \ldots, s\}$.\footnote{If a set $S$
of sensitivity $s$ is contained in a subcube $S'$ obtained by fixing $x_{i_1}, \ldots, x_{i_{k}}$, removing the variables that have been fixed gives us
a set $S''\subseteq \{0, 1\}^{n-k}$ of sensitivity $s-k$.} In contrast, the sets of the second type do not reduce to low-sensitivity subsets of $\{0, 1\}^{n-k}$ for $k>0$.
Therefore, we call them {\em irreducible}.

Our main technical result (Theorem \ref{thm:main}) is that any irreducible $S\subseteq \{0, 1\}^n$ must be of size $|S|\geq  2^{n-s+1} - 2^{n-2s}$, almost twice as large as a subcube obtained
by fixing $s$ variables, and this bound is tight.

As a consequence, we obtain a surprising result: if $S\subseteq \{0, 1\}^n$ has sensitivity $s$, then either $|S|=2^{n-s}$ or $|S|\geq \frac{3}{2} 2^{n-s}$. 
That is, such a set $S$ cannot have a size between $2^{n-s}$ and $\frac{3}{2} 2^{n-s}$  (Theorem \ref{thm:stronger}).

In a following work \cite{Ambainis_Prusis_Vihrovs}, we have applied this theorem to obtain a new upper bound on block sensitivity in terms of sensitivity:
\begin{equation}
bs(f) \leq \max\left(2^{s(f)-1}\left(s(f)-\frac 1 3\right),s(f) \right).
\end{equation}

{\bf Related work.}
A gap theorem of a similar type is known for the spectral norm of Boolean functions \cite{green2008boolean}: the spectral norm of a Boolean function is either equal to 1 or is at least $\frac 3 2$. Both results have the constant $\frac 3 2$ appearing in them and there is some resemblance between the constructions of optimal sets/functions but the proof methods are
quite different and it is not clear to us if there is a more direct connection between
the results.

\section{Preliminaries}

In this section we give the basic definitions used in the paper. Let $f : \{0,1\}^n \rightarrow \{0,1\}$ be a Boolean function of $n$ variables, where the 
$i$-th variable is denoted by $x_i$. We use $x=(x_1, \ldots, x_n)$ to denote
a tuple consisting of all input variables $x_i$.

\begin{definition}
The \emph{sensitivity complexity} $s(f,x)$ of $f$ on an input $x$ is defined as $| \{i \mid f(x) \neq f(x^{(i)})\} |$, where $x^{(i)}$ is an input obtained from $x$ by flipping the value of the $i$-th variable. The \emph{sensitivity} $s(f)$ of $f$  is defined as 
\begin{equation}
s(f) = \max \{s(f,x) \mid x \in \{0,1\}^n\}.
\end{equation}
The \emph{$c$-sensitivity} $s_c(f)$ of $f$ is defined as 
\begin{equation}
s_c(f) = \max \{s(f,x) \mid x \in \{0,1\}^n, f(x)=c\}.
\end{equation}
\end{definition}

In this paper we will look at $\{0, 1\}^n$ as a set of vertices for a graph $Q_n$
(called the {\em $n$-dimensional Boolean cube} or {\em hypercube})
in which we have an edge $(x, y)$ whenever $x=(x_1, \ldots, x_n)$ and 
$y=(y_1, \ldots, y_n)$ differ in exactly one position.
We look at subsets $S \subseteq \{0,1\}$ as subgraphs (induced by the subset of vertices $S$) in this graph. 

\begin{definition}
We define an $m$-dimensional \emph{subcube} or $m$-\emph{subcube} of $Q_n$ to be a cube induced by the set of all vertices that have the same bit values on $n-m$ positions $x_{i_1}, \ldots, x_{i_{n-m}}$ where $i_j$ are all different.
\end{definition}

We denote a subcube that can be obtained by fixing some continuous sequence $b$ of starting bits by $Q_b$. For example, $Q_0$ and $Q_1$ can be obtained by fixing the first bit and $Q_{01}$ can be obtained by fixing the first two bits to 01. We use a wildcard * symbol to indicate that the bit in the corresponding position is not fixed. For example, by $Q_{*10}$ we denote a cube obtained by fixing the second and the third bit to 10.

\begin{definition}
Two $m$-dimensional subcubes of $Q_n$ are \emph{adjacent} if the fixed $n-m$ positions of both subcubes are the same and their bit values differ in exactly one position.
\end{definition}

Each Boolean function $f$ can be uniquely represented as a set of vertices $V(f) = \{ x \mid f(x) = 1\}$, thus each function of $n$ variables represents a single subgraph $G(f)$ of $Q_n$ induced by $V(f)$. Note that for an input $x \in V(f)$, the sensitivity $s(f,x)$ is equal to the number of vertices not in $V(f)$ and connected to $x$ with an edge in $Q_n$. Thus the sensitivity of $V(f)$ is equal to $s_1(f)$.

For a Boolean function $f$, the minimum degree $\delta(G(f))$ corresponds to $n-s_1(f)$, and the minimum degree of a graph induced by $\{0,1\}^n \setminus V$ corresponds to $n - s_0(f)$.

In the rest of this paper we phrase our results in terms of subgraphs of $Q_n$.

\begin{definition}
Let $X$ and $Y$ be subgraphs of $Q_n$. By $X \cap Y$ we denote the \emph{intersection} graph of $X$ and $Y$ that is the graph $(V(X) \cap V(Y), E(X) \cap E(Y))$. By $X \setminus Y$ denote the \emph{complement} of $Y$ in $X$ that is the graph induced by the vertex set $V(X) \setminus V(Y)$ in $X$.
\end{definition}

We also denote the degree of a vertex $v$ in a graph $G$ by $\deg(v,G)$.

\bigskip

The main focus of the paper is on the \emph{irreducible} class of subgraphs:

\begin{definition}
We call a subgraph $G \subset Q_n$ \emph{reducible} if it is a subgraph of some graph $S \subset Q_n$ where $V(S)$ can be obtained by fixing one or more of values $x_i$. Conversely, other subgraphs we call \emph{irreducible}.
\end{definition}

Another way to define the irreducible graphs is to say that each such graph contains at least one vertex in each of the $(n-1)$-subcubes of $Q_n$.

\section{Simon's Lemma}

In this section we present a theorem proved by Simon \cite{Simon_1983}. 


\begin{theorem}[Simon] \label{thm:simon}
Let $G = (V, E)$ be a non-empty subgraph of  $Q_n$ ($n \geq 0$) of minimum cardinality among the subgraphs with $\delta(G) = d$ ($d \geq 0$). Then $G$ is a $d$-dimensional subcube of $Q_n$ and $|V| = 2^d$.
\end{theorem}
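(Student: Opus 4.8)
The plan is to argue by induction on $n$. The base case $n=0$ is immediate: $Q_0$ is a single vertex, the only non-empty subgraph has $\delta=0=d$, and it is a $0$-subcube with $2^0=1$ vertices. It is also convenient to settle the extreme value $d=n$ at once: if $\delta(G)=n$ then every vertex of $G$ has all $n$ of its neighbours in $G$, and since $Q_n$ is connected this forces $G=Q_n$, an $n$-subcube with $2^n$ vertices. So from now on I assume $n\geq 1$ and $0\leq d\leq n-1$. For the inductive step, split $Q_n$ along the first coordinate into the two adjacent $(n-1)$-subcubes $G_0$ and $G_1$, and set $G^0=G\cap G_0$ and $G^1=G\cap G_1$. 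I distinguish two cases according to whether $G$ meets one or both halves.

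\emph{Case 1: one part is empty}, say $G^1=\emptyset$. Then $G$ lies entirely inside $G_0\cong Q_{n-1}$, and for every $v\in V(G)$ the vertex $v^{(1)}$ is absent from $G$, so the degree of $v$ in $G$ equals its degree inside $G_0$; hence $G$, viewed as a subgraph of $Q_{n-1}$, still has minimum degree $d$. It is moreover still of minimum cardinality there, since a smaller subgraph of $Q_{n-1}$ with $\delta=d$ would be a smaller counterexample inside $Q_n$ as well. The induction hypothesis then gives that $G$ is a $d$-subcube of $Q_{n-1}$, hence of $Q_n$, with $|V|=2^d$.

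\emph{Case 2: both $G^0$ and $G^1$ are non-empty}. Every vertex of $G^0$ has at most one neighbour outside $G_0$ (namely its flip in the first coordinate), so $\deg(v,G^0)\geq d-1$ and thus $\delta(G^0)\geq d-1$; symmetrically $\delta(G^1)\geq d-1$. By the induction hypothesis every non-empty subgraph of $Q_{n-1}$ with minimum degree at least $d-1$ has at least $2^{d-1}$ vertices, so $|V(G)|=|V(G^0)|+|V(G^1)|\geq 2^{d-1}+2^{d-1}=2^d$. On the other hand, fixing the last $n-d$ coordinates yields a $d$-regular $d$-subcube of $Q_n$ with exactly $2^d$ vertices, so the minimum cardinality in question is at most $2^d$; in particular this rules out Case 2 when $d=0$ (then $|V(G)|\geq 2>1$). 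Hence $d\geq 1$, $|V(G)|=2^d$, and every inequality above is an equality: $|V(G^0)|=|V(G^1)|=2^{d-1}$, forcing $\delta(G^0)=\delta(G^1)=d-1$, and each $G^i$ is of minimum cardinality for this minimum degree. By induction $G^0$ and $G^1$ are $(d-1)$-subcubes of $Q_{n-1}$, in particular $(d-1)$-regular. Since $G^0$ is already $(d-1)$-regular, every $v\in V(G^0)$ needs $v^{(1)}$ to lie in $G^1$ in order to have degree $d$ in $G$, so the map $v\mapsto v^{(1)}$ injects $V(G^0)$ into $V(G^1)$; as the two sets have equal size it is a bijection, so $V(G^1)$ is exactly the image of $V(G^0)$ under flipping the first bit. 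Writing $G^0$ as the subcube of $G_0$ obtained by fixing coordinates $i_1,\dots,i_{n-d}\in\{2,\dots,n\}$ to values $a_1,\dots,a_{n-d}$, we conclude that $V(G)=V(G^0)\cup V(G^1)$ is precisely $\{x\in Q_n : x_{i_1}=a_1,\dots,x_{i_{n-d}}=a_{n-d}\}$, a $d$-subcube of $Q_n$, with $|V|=2^d$.

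The routine part is the counting in Case 2 giving $|V|\geq 2^d$. The main obstacle is the equality analysis in Case 2: one must show that equality forces the two halves to be $(d-1)$-subcubes \emph{aligned} across the splitting coordinate (via the degree argument that each vertex must have its partner present), and then verify that gluing these aligned halves back together produces a genuine $d$-subcube of $Q_n$ — not merely a $d$-regular graph. Secondary care is needed with the degenerate values $d=0$ and $d=n$, and with the observation in Case 1 that minimality of cardinality is inherited when restricting to the $(n-1)$-subcube.
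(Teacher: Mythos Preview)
Your proof is correct and follows essentially the same route as the paper: induction on $n$, splitting $Q_n$ along one coordinate into halves $G_0,G_1$, handling the one-empty-half case by induction and the both-nonempty case by the $2\cdot 2^{d-1}$ count, then recovering the $d$-subcube from the equality analysis. You are simply more explicit than the paper in treating the degenerate cases $d=0$ and $d=n$, in arguing that minimality is inherited in Case~1, and in spelling out why the two $(d-1)$-subcubes must be aligned across the split coordinate; the paper compresses this last point into the single phrase ``Since $\delta(G)=d$, $G_0\cap G$ is a subcube adjacent to $G_1\cap G$.''
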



This theorem implies:

\begin{corollary}
Let $f(x)$ be a Boolean function on $n$ variables. If $f(x)$ is not always 0, then
\begin{equation}
| \{ x \mid f(x) = 1 \} | \geq 2^{n-s_1(f)},
\end{equation}
and the minimum is obtained iff some $s_1(f)$ positions hold the same bit values for all $x : f(x) = 1$.
\end{corollary}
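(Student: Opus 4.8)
The plan is to obtain this corollary as a direct translation of Theorem~\ref{thm:simon} from the language of Boolean functions to that of subgraphs of $Q_n$. First I would set $G = G(f)$, the subgraph of $Q_n$ induced by $V(f) = \{x \delim f(x) = 1\}$; since $f$ is not identically $0$, this graph is non-empty. As recorded in the Preliminaries, for every $x \in V(f)$ the sensitivity $s(f,x)$ equals the number of neighbours of $x$ in $Q_n$ that lie outside $V(f)$, so $\deg(x, G) = n - s(f,x)$, and hence $\delta(G) = n - \max_{x \in V(f)} s(f,x) = n - s_1(f)$. Writing $d = n - s_1(f)$, the graph $G$ is therefore one of the subgraphs of $Q_n$ with $\delta(G) = d$.

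Next I would invoke Theorem~\ref{thm:simon}: among all non-empty subgraphs of $Q_n$ with minimum degree $d$, one of minimum cardinality has exactly $2^d$ vertices. Since $G$ is such a subgraph, $|V(f)| = |V(G)| \geq 2^d = 2^{n - s_1(f)}$, which is the claimed bound.

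For the equality characterization I would argue both directions. If $|V(f)| = 2^{n-s_1(f)}$, then $G$ attains the minimum cardinality among subgraphs of $Q_n$ with $\delta(G) = d$, so by Theorem~\ref{thm:simon} it is a $d$-dimensional subcube; by definition such a subcube is obtained by fixing $n - d = s_1(f)$ coordinates, i.e.\ those $s_1(f)$ positions hold the same bit values for every $x$ with $f(x) = 1$. Conversely, if some $s_1(f)$ positions are fixed for all $x \in V(f)$, then $V(f)$ is contained in an $(n-s_1(f))$-dimensional subcube, which has exactly $2^{n-s_1(f)}$ vertices; combined with the lower bound just proved, $V(f)$ must coincide with that subcube, so $|V(f)| = 2^{n-s_1(f)}$.

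There is no real obstacle here: the two points requiring a little care are the bookkeeping identity $\delta(G(f)) = n - s_1(f)$, and the remark that Theorem~\ref{thm:simon}, although phrased for a single minimum-cardinality graph, applies to \emph{every} subgraph attaining that minimum cardinality (this is what licenses the ``only if'' direction in the equality case). The hypothesis that $f$ is not always $0$ is exactly what is needed so that the theorem applies to a non-empty graph.
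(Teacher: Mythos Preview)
Your proposal is correct and follows essentially the same route as the paper: set $G=G(f)$, observe $\delta(G)=n-s_1(f)$, and apply Theorem~\ref{thm:simon} to obtain both the lower bound and the subcube characterization of equality. Your treatment is slightly more explicit than the paper's in spelling out the converse direction of the equality case and in flagging that Theorem~\ref{thm:simon} (as stated for \emph{a} minimum-cardinality graph) in fact constrains \emph{every} such graph, but neither of these is a substantive departure.
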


\begin{proof}
Let $G$ be a subgraph of $Q_n$ induced by the set of vertices $V = \{ x \mid f(x) = 1 \}$. The minimum degree of $G$ is $\delta(G) = n - s_1(f)$. Then by Theorem \ref{thm:simon} $|V| \geq 2^{n-s_1(f)}$. The minimum is obtained iff $G$ is an $(n-s_1(f))$-subcube of $Q_n$. This means that it is defined by some bits fixed in $s_1(f)$ positions.
\end{proof}

\section{Smallest Irreducible Subgraphs}

In this section we prove the main theorem.

\begin{theorem} \label{thm:main}
Let $G = (V, E)$ be a non-empty irreducible subgraph of $Q_n$ ($n \geq 1$) with the minimum degree $d \geq 0$. Let the smallest possible cardinality of $V$ be $S(n,d)$. Then
\begin{equation}
S(n,d) = \left\lceil 2^{d+1} - 2^{2d-n} \right\rceil.
\end{equation}
\end{theorem}

The proof of Theorem \ref{thm:main} is by induction on $n$ and involves case analysis going as deep as considering $(n-3)$-dimensional subcubes of $Q_n$.

In the language of Boolean functions, this theorem corresponds to:

\begin{corollary}\label{thm:result}
Let $f(x)$ be a Boolean function on $n$ variables. If $\forall i \in [n] \, \forall b \in \{0, 1\} \, \exists x \, (x_i = b, f(x) = 1)$, then
\begin{equation}
| \{ x \mid f(x) = 1 \} | \geq 2^{n-s_1(f)+1} - 2^{n-2s_1(f)}.
\end{equation}
\end{corollary}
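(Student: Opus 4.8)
Since Corollary~\ref{thm:result} is exactly Theorem~\ref{thm:main} rephrased for Boolean functions --- put $d=n-s_1(f)$, observe that the displayed hypothesis ``$\forall i\,\forall j\,\exists x\,(x_i=j,\,f(x)=1)$'' says precisely that $V(f)$ is irreducible, and that erasing the ceiling only weakens the inequality --- it suffices to prove Theorem~\ref{thm:main}. I would do this by induction on $n$, following the skeleton of Simon's lemma (Theorem~\ref{thm:simon}): cut $Q_n$ into the halves $G_0,G_1$ along the first coordinate and put $A=G\cap G_0$ and $B=G\cap G_1$, viewed as subgraphs of $Q_{n-1}$. Irreducibility of $G$ makes $A$ and $B$ non-empty (it hits both of the $(n-1)$-subcubes $x_1=0$ and $x_1=1$), and since every vertex of $G_0$ has a unique neighbour in $G_1$ we get $\delta(A)\ge d-1$ and $\delta(B)\ge d-1$. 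Two simple facts get used throughout: $\lceil 2^{d+1}-2^{2d-n}\rceil$ is nondecreasing in $d$, and $2\lceil x\rceil\ge\lceil 2x\rceil$.

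The decisive split is on whether $A$ and $B$ are themselves irreducible in $Q_{n-1}$. If both are, the induction hypothesis gives $|A|,|B|\ge\lceil 2^{d}-2^{2d-n-1}\rceil$, and since $2(2^{d}-2^{2d-n-1})=2^{d+1}-2^{2d-n}$ we get
\[
|V|=|A|+|B|\ \ge\ 2\lceil 2^{d}-2^{2d-n-1}\rceil\ \ge\ \lceil 2^{d+1}-2^{2d-n}\rceil,
\]
which is the claim; the base case $n=1$, where irreducibility forces $V=\{0,1\}$, is checked by hand. The \emph{hard case}, and the part I expect to contain the real work, is when one half --- say $B$ --- is reducible, i.e.\ $B$ sits inside an $(n-2)$-subcube of $G_1$: some coordinate $j\neq 1$ is constant, $x_j=b$, on all of $B$, so $G$ entirely avoids the $(n-2)$-subcube $\{x_1=1,\ x_j=1-b\}$. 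The plan is to exploit this rigidity: the $(n-1)$-subcube $x_j=1-b$ is met only inside $A$, and every vertex of $A$ whose degree \emph{within $A$} equals $d-1$ must have its first-coordinate neighbour in $B$, so $B$ is forced to contain a translated copy of the ``low-degree part'' of $A$. From here I would push the induction through by a finer analysis: either peel off the forced coordinate $j$ to land in a lower-dimensional instance, or argue directly that $A$ cannot be a minimum-size subcube --- a subcube has no vertex of degree exceeding $d-1$, which would force $B$ to be a full copy of $A$ and hence unable to reach into $x_j=b$ while keeping $\delta(B)\ge d-1$. Either way, the main obstacle is the bookkeeping: making the ceilings line up and showing this ``one reducible half'' configuration is never smaller than $\lceil 2^{d+1}-2^{2d-n}\rceil$; I would probably organise it as a secondary induction on $n$.

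For the matching upper bound I would write down an extremal $G$ explicitly. When $2d<n$ the bound is just $2^{d+1}$, attained by the union of the two antipodal $d$-subcubes $\{x_1=\dots=x_{n-d}=0\}$ and $\{x_1=\dots=x_{n-d}=1\}$ (the last $d$ coordinates free): this is irreducible, and every vertex has degree exactly $d$. When $2d\ge n$, partition the coordinates into $2d-n$ ``free'' ones and $2(n-d)$ ``base'' ones, choose on the base cube $Q_{2(n-d)}$ an irreducible configuration $T$ of minimum degree $n-d$ and size $2^{n-d+1}-1$ (for $n-d=1$ this is $Q_2$ with one vertex deleted, and in general $T$ is produced recursively along its own first-coordinate split, gluing a ``heavy'' part that carries the cross-edges to a reducible part), and let $G$ consist of all inputs whose base coordinates form a vertex of $T$. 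Then $\delta(G)=(2d-n)+\delta(T)=d$, $G$ is irreducible because $T$ is, and $|G|=2^{2d-n}(2^{n-d+1}-1)=2^{d+1}-2^{2d-n}$. Checking that such a $T$ can be made of size $2^{n-d+1}-1$ is an instance of the same problem in a smaller cube, so the construction runs in parallel with the induction for the lower bound.
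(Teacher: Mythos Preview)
Your reduction of Corollary~\ref{thm:result} to Theorem~\ref{thm:main} is exactly the paper's: set $d=n-s_1(f)$, read the hypothesis as irreducibility of $V(f)$, drop the ceiling. Likewise, your ``both halves irreducible'' case of the induction is correct and matches the paper's first sub-case (the paper phrases it as ``every $(n-2)$-subcube meets $G$'', which after one renaming is your situation), and your $2d<n$ construction is identical to the paper's.

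The gap is in the hard case, which you openly leave as a plan. Your diagnosis is right --- one half, say $B$, sits inside $\{x_j=b\}$, leaving an empty $(n-2)$-subcube --- and your observation that $A$ cannot itself be a $(d-1)$-subcube is valid (if it were, the forced copy inside $B$ would either pin $x_j$ on all of $G$ or violate $B\subseteq\{x_j=b\}$). But this rules out only one extreme configuration of $A$; it does not by itself yield the bound $\lceil 2^{d+1}-2^{2d-n}\rceil$, and ``peel off $j$ and recurse'' does not obviously preserve the right invariants. The paper's resolution is more elaborate than bookkeeping: after relabelling so that $G_{00}=\varnothing$, it splits on whether either of $G\cap G_{01}$, $G\cap G_{10}$ has size exactly $2^{d-1}$. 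In the ``$=2^{d-1}$'' branch, that piece is a $(d-1)$-subcube by Simon, which forces an adjacent $(d-1)$-subcube inside $G_{11}$; the remaining vertices in $G_1$ are then counted by a separate lemma (Lemma~\ref{thm:fancy}) bounding $|V(G\setminus G_l)|$ below when $G$ contains an $l$-subcube $G_l\subseteq G_1$ and the $G_0$-side has degree $\ge d$. In the ``$\neq 2^{d-1}$'' branch, the paper invokes a bootstrap lemma (Lemma~\ref{thm:minsize}: either $|V|=2^d$ or $|V|\ge\min_i S(i,d)$) together with a one-sided strengthening of Simon (Lemma~\ref{thm:extended}: only the $G_0$-vertices need degree $\ge d$) and a further case split on whether $G_{11}$ is itself irreducible. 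None of these three auxiliary lemmas is visible in your sketch, and they are where the actual work lives.

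One smaller point: your $2d\ge n$ construction via a recursively built $T\subseteq Q_{2(n-d)}$ works, but is more machinery than needed. The paper just takes the union of two non-adjacent $d$-subcubes of $Q_n$,
\[
S_l=\{x:x_1=\cdots=x_{n-d}=1\},\qquad S_r=\{x:x_{n-d+1}=\cdots=x_{2(n-d)}=1\},
\]
which overlap in a $(2d-n)$-subcube, giving $|S_l\cup S_r|=2^{d+1}-2^{2d-n}$ directly. Your $T$ is precisely the projection of this onto the first $2(n-d)$ coordinates, so the two constructions agree, but you can state it in one line.
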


Theorem \ref{thm:main} together with Lemma \ref{thm:minsize} imply the following generalization of Simon's lemma:

\begin{theorem}\label{thm:stronger}
Let $G = (V, E)$ be a non-empty subgraph of $Q_n$ ($n \geq 0$) with $\delta(G) = d$. Then either $|V| = 2^d$ or $|V| \geq \frac 3 2 \cdot 2^d$, with $V=|2^d|$ achieved if and only if $G$ is a $d$-subcube.
\end{theorem}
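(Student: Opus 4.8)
The plan is to deduce Theorem~\ref{thm:stronger} from Simon's lemma (Theorem~\ref{thm:simon}) together with the lower bound for irreducible subgraphs (Theorem~\ref{thm:main}), using the reduction process to pass from an arbitrary $G$ to an irreducible one of the same size.

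First I would record the basic \emph{reduction step}. Suppose $G$ is reducible, say $V(G)$ lies in the subcube $S'$ obtained by fixing a coordinate $x_i = a$. Every edge of $G$ already lies inside $S'$, and for each $v \in V(G)$ the neighbour $v^{(i)}$ has $x_i = 1-a$, so it lies outside $V(G)$; hence none of the $\deg(v,G)$ neighbours of $v$ in $G$ points in direction $i$. Identifying $S'$ with $Q_{n-1}$ by forgetting coordinate $i$ therefore preserves all adjacencies within $G$, so the image $\tilde G$ satisfies $\deg(v,\tilde G)=\deg(v,G)$ for every $v$, whence $\delta(\tilde G)=d$ and $|V(\tilde G)|=|V(G)|$; moreover $d$-subcubes of $Q_{n-1}$ correspond to $d$-subcubes of $Q_n$ under this operation. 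Iterating (the ambient dimension strictly decreases, so the process terminates) produces an irreducible $G^\ast \subseteq Q_{n'}$ with $\delta(G^\ast)=d$, $|V(G^\ast)|=|V(G)|$ and $n'\ge d$, and $G$ is a $d$-subcube of $Q_n$ if and only if $G^\ast$ is a $d$-subcube of $Q_{n'}$.

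Next I would analyse $G^\ast$. If $n'=d$, then every vertex of $G^\ast$ already has degree $d$ in $Q_d$, so $G^\ast=Q_d$ is a $d$-subcube and $|V|=2^d$. If $G^\ast$ is not a $d$-subcube, then necessarily $n'\ge d+1$, so Theorem~\ref{thm:main} applies and, using $2d-n'\le d-1$,
\[
|V| = |V(G^\ast)| \ge \left\lceil 2^{d+1}-2^{2d-n'} \right\rceil \ge 2^{d+1}-2^{d-1} = \tfrac{3}{2}\cdot 2^d
\]
for $d\ge 1$ (the right-hand bound being an integer there), while for $d=0$ one reads off $\lceil 2-2^{-n'}\rceil = 2 > \tfrac32$ directly. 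Combining the two cases with Simon's lemma, which already guarantees $|V|\ge 2^d$ with equality precisely for $d$-subcubes, yields the dichotomy: either $G$ is a $d$-subcube and $|V|=2^d$, or $|V|\ge\tfrac32\cdot 2^d$.

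I expect the only delicate points to be bookkeeping rather than genuine obstacles: checking that the reduction step really preserves the minimum degree (the observation that a vertex loses no neighbour and gains none), verifying that an irreducible graph of minimum degree $d$ inside $Q_d$ must be the full cube, and confirming the elementary inequality $\lceil 2^{d+1}-2^{2d-n'}\rceil \ge \tfrac32\,2^d$ for all $n'\ge d+1$, including its slightly anomalous value at $d=0$. The substantive work — the lower bound on irreducible sets — is already packaged into Theorem~\ref{thm:main}.
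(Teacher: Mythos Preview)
Your proposal is correct and follows essentially the same route as the paper. The paper packages the ``keep passing to a smaller subcube until the graph is irreducible'' step into Lemma~\ref{thm:minsize} (proved earlier by the same one-coordinate-at-a-time induction you describe) and then plugs in the value of $S(i,d)$ from Theorem~\ref{thm:main}, minimising at $i=d+1$; your inlined reduction to an irreducible $G^\ast\subseteq Q_{n'}$ followed by the inequality $\lceil 2^{d+1}-2^{2d-n'}\rceil\ge \tfrac32\cdot 2^d$ is exactly the same argument unpacked.
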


Equivalently, if $G$ has sensitivity $s$, then either $|V| = 2^{n-s}$ or $|V| \geq \frac 3 2 2^{n-s}$. Thus there is a gap between the possible values for $|V|$ --- which we find quite surprising.

In the next two subsections we prove Theorem \ref{thm:main} and in the last two subsections we show how it implies Corollary \ref{thm:result} and  Theorem \ref{thm:stronger}.

\subsection{Instances Achieving the Minimum} \label{sec:sufficiency}

In this section we prove that the given number of vertices is sufficient. We distinguish three cases:
\begin{enumerate}
	\item $n = 1$. The only valid graph satisfying the properties is $G = Q_n$ with $d = 1$. Then $|V| = 2$.

	\item $n > 1$, $2d < n$. Since $2^{2d-n} < 1$, $|V|$ should be $2^{d+1}$. We take 
\begin{equation}
S_j = \{ x \mid \forall i \in [n-d] \, (x_i = j) \}
\end{equation}
for $j \in \{0, 1\}$ and $V = S_0 \cup S_1$. Let $G$ be the graph induced by $V$ in $Q_n$. Then $G$ consists of two $d$-subcubes of $Q_n$ with no common vertices. Since $n-d > 1$, no edge connects any two vertices between these subcubes, thus $\delta(G) = d$. For the irreducibility, suppose that some $(n-1)$-subcube $H$ is defined by fixing $x_i = j$. If $i \leq n - d$, then $H \cap S_j \neq \varnothing$. If $i > n - d$, then $H \cap S_j \neq \varnothing$ for any $j$. Then $|V| = 2\cdot 2^d = 2^{d+1}$.

	\item $n > 1$, $2d \geq n$. Then $|V|$ should be $2^{d+1} - 2^{2d-n}$. We take 
\begin{align}
S_l &= \{ x \mid \forall i \in [n-d] \, (x_i = 1) \},\\
S_r &= \{ x \mid \forall i \in [n-d+1; 2(n-d)] \, (x_i = 1) \}
\end{align}
and $V = S_l \cup S_r$. Let $G$ be the graph induced by $V$ in $Q_n$. Graphs induced by $S_l$ and $S_r$ are $d$-dimensional subcubes of $Q_n$. Since they are not adjacent, $\delta(G) = d$. For the irreducibility, observe that any bit position $i$ is not fixed for at least one of $S_l$ or $S_r$. Then the $(n-1)$-subcube $H$ obtained by fixing $x_i$ holds at least one of the vertices of $G$. Since $S_l \cap S_r = \{ x \mid \forall i \in [2(n-d)] (x_i = 1) \}$, it follows that 
\begin{equation}
|V| = 2\cdot 2^d - 2^{n-2(n-d)} = 2^{d+1} - 2^{2d-n}.
\end{equation}
\end{enumerate}

\subsection{Optimality}

In this section we prove that there are no such graphs with a number of vertices less than $\left\lceil 2^{d+1} - 2^{2d-n} \right\rceil$.

The proof is by induction on $n$. As the base case we take $n \leq 2$. From the fact that each $(n-1)$-subcube contains at least one vertex of $G$ it follows that $|V| \geq 2$. This proves the cases $n = 1$, $d = 1$ and $n=2$, $d = 0$ (and the case $n=1$, $d=0$ is not possible). Suppose $n=2$, $d=1$: if there were 2 vertices in $G$, then either some of the 1-subcubes would contain no vertex of $G$ or there would be a vertex of $G$ with degree 0 (which contradicts $d=1$). Thus, in this case $|V| \geq 3 = 2^{1+1} - 2^{2-2}$. Suppose $n=2$, $d=2$. Then $G = Q_n$ and $|V| = 4 = 2^{2+1} - 2^{4-2}$.

Inductive step. First suppose that each $(n-2)$-subcube of $Q_n$ contains at least one vertex of $G$, then $G\cap Q_0$ and $G \cap Q_1$ are irreducible. The minimum degrees of $G\cap Q_0$ and $G \cap Q_1$ are at least $d-1$, since each vertex of $G\cap Q_0$ can have at most one neighbour in $Q_1$ (and conversely). By applying the inductive assumption to the cubes $Q_0$ and $Q_1$, we obtain that 
\begin{align}
|V| &\geq 2 \cdot \left\lceil 2^{(d-1)+1} - 2^{2(d-1)-(n-1)} \right\rceil =\\
&= 2 \cdot \left\lceil 2^{d} - 2^{2d-n-1} \right\rceil \geq \\
&\geq \left\lceil 2^{d+1} - 2^{2d-n} \right\rceil.
\end{align}

Now suppose that there is some $(n-2)$-subcube without vertices of $G$. WLOG assume it is $Q_{00}$, i.e. $G \cap Q_{00} = \varnothing$. We prove two lemmas.

\begin{lemma} \label{thm:minsize}
Let $G = (V, E)$ be a non-empty subgraph of $Q_n$ ($n \geq 0$) with $\delta(G) = d$ ($d \geq 0$). Then either $|V| = 2^d$ or $|V| \geq \min_{i=d+1}^n S(i, d)$.
\end{lemma}

\begin{proof}
The proof is by induction on $n$. Base case: $n = 0$. Then $G = Q_n$, $d = 0$ and $|V| = 1 = 2^{0-0}$. In the inductive step we prove the statement for $n > 0$. If $n = d$, then $G = Q_n$, and $|V| = 2^n = 2^d$. Otherwise $n > d$. If each $(n-1)$-subcube of $Q_n$ contains vertices of $G$, then $|V| \geq S(n,d)$ by the definition of $S$. Otherwise there is an $(n-1)$-subcube of $Q_n$ that does not contain any vertex of $G$. Then by induction the other $(n-1)$-subcube contains either $2^d$ or at least $\min_{i=d+1}^{n-1} S(i, d)$ vertices of $G$. Combining the two cases together gives us the result.
\end{proof}

\begin{lemma} \label{thm:extended}
Let $G = (V, E)$ be a subgraph of $Q_n$ ($n \geq 1$). Let $G' = G \cap Q_0$. If $G'$ is not empty and $\min_{v\in G'}\deg(v, G) \geq d$, then $|V| \geq 2^d$.
\end{lemma}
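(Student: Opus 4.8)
The plan is to mimic the inductive structure of Simon's lemma (Theorem~\ref{thm:simon}), but to carry along only the weaker hypothesis that \emph{some} non-empty part of $G$ lives inside $G_0$ and that every vertex of that part has degree at least $d$ in the full graph $G$ (not in the induced subgraph). First I would induct on $n$. The base case $n=1$ is immediate: $G_0$ is a single vertex, $G'$ non-empty forces that vertex into $V$, and $\deg(v,G)\geq d$ with $d\leq 1$ gives $|V|\geq 1 = 2^0$ when $d=0$, and if $d=1$ then the vertex must also have its unique neighbour in $G_1$ present, so $|V|\geq 2 = 2^1$.

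For the induction step, split $G_0$ along one of its internal coordinates into two $(n-2)$-subcubes $G_{00}$ and $G_{01}$, and correspondingly write $G' = G'_{0}\cup G'_{1}$ where $G'_j = G\cap G_{0j}$. At least one of $G'_0, G'_1$ is non-empty; say $G'_0$ is. Every vertex $v\in G'_0$ has $\deg(v,G)\geq d$; at most one of those incident edges leaves $G_{0}$ into $G_{1}$ and at most one leaves $G_{00}$ into $G_{01}$, but the key point is to bound the degree \emph{restricted to the cube $G_{00}$ together with whichever neighbouring cube we recurse into}. The cleanest accounting: a vertex $v\in G'_0$ has at most one neighbour in $G_{01}$ and at most one neighbour in $G_1$; hence $v$ has at least $d-2$ neighbours inside $G_{00}$ itself. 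If $G'_1$ is also non-empty we get two applications of the inductive hypothesis (one inside $G_0$ viewed as a copy of $Q_{n-1}$, splitting off $G_{00}$ vs $G_{01}$), each yielding $\geq 2^{d-1}$, for a total of $\geq 2^d$. If $G'_1$ is empty, then every $v\in G'_0$ actually has no neighbour in $G_{01}$, so it has at least $d-1$ neighbours in $G_{00}$, and we need to also recover the missing factor of $2$ from the edges going into $G_1$.

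The device that makes this bookkeeping close is to recurse not on $G_{00}$ alone but on the $(n-1)$-subcube $G_0$ with the roles set up so that $G'\subseteq G_0$ and each vertex of $G'$ has degree $\geq d-1$ when we also forget the edges leaving $G_0$; that is the induction hypothesis applied in dimension $n-1$, giving $|V(G\cap G_0)|\geq 2^{d-1}$. Separately, each vertex of $G\cap G_0$ that attains degree $\geq d$ in $G$ must, since it has $\geq d-1$ neighbours inside $G_0$ (wait: it may have up to $d$ inside $G_0$), contribute its $G_1$-neighbour to $V$ whenever its internal degree is exactly $d-1$; a short case analysis on whether $G\cap G_1$ is empty then shows $|V| = |V(G\cap G_0)| + |V(G\cap G_1)| \geq 2^{d-1}+2^{d-1} = 2^d$, the only delicate point being that if $G\cap G_1=\varnothing$ then every $v\in G'$ has all $\geq d$ neighbours inside $G_0$, so in fact the inductive hypothesis applies to $G\cap G_0$ with parameter $d$ rather than $d-1$, directly giving $|V|\geq 2^d$.

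The main obstacle I anticipate is precisely this last bifurcation: the hypothesis bounds degrees in $G$, not in the induced subcubes, so I must be careful that when an edge is ``used up'' crossing from $G_0$ to $G_1$ the corresponding endpoint in $G_1$ is actually counted, and that I never double-count it or lose the factor of two. Once the invariant is phrased as ``$G'\subseteq G_0$ non-empty and $\min_{v\in G'}\deg(v,G)\geq d$'' (with $\deg$ always measured in the ambient $G$), the induction goes through by the standard split, and no reducibility or minimality assumption is needed — which is exactly why this lemma is stated in the stripped-down form above and can be fed back into the optimality argument.
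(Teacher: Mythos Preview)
Your base case and overall inductive framework are fine, but the induction step has a real gap in the case $G\cap G_1\neq\varnothing$. There you assert $|V(G\cap G_0)|+|V(G\cap G_1)|\geq 2^{d-1}+2^{d-1}$, but you have no control whatsoever over degrees of vertices in $G\cap G_1$: the hypothesis bounds $\deg(v,G)$ only for $v\in G'=G\cap G_0$. So nothing prevents $G\cap G_1$ from being, say, a single vertex, and then the split $|V|=|V(G\cap G_0)|+|V(G\cap G_1)|$ gives only $2^{d-1}+1$, which is short of $2^d$ once $d\geq 2$. Your earlier attempt, recursing ``inside $G_0$ viewed as a copy of $Q_{n-1}$'', has the same defect from the other side: it can only bound $|V(G\cap G_0)|$, and the vertices in $G_1$ are again uncounted.

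The missing idea is that the induction should split along a coordinate \emph{other} than the one defining $G_0$. Write $Q_n$ as $G_{*0}\cup G_{*1}$ along the second coordinate. Then $G_{0j}$ is precisely the ``$0$-half'' of the $(n-1)$-cube $G_{*j}$, so the one-sided hypothesis is inherited by each piece: for $v\in G\cap G_{0j}$ one loses at most the single edge to $G_{0(1-j)}$, giving $\min_{v\in G\cap G_{0j}}\deg(v,G\cap G_{*j})\geq d-1$. If both $G\cap G_{00}$ and $G\cap G_{01}$ are non-empty, two applications of the inductive hypothesis (to $G\cap G_{*0}$ and $G\cap G_{*1}$) give $2^{d-1}+2^{d-1}=2^d$. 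If one of them, say $G\cap G_{0j}$, is empty, then $G'\subseteq G_{0(1-j)}$ and each $v\in G'$ loses \emph{no} edge across the second coordinate, so $\min_{v\in G'}\deg(v,G\cap G_{*(1-j)})\geq d$ and a single application in $G_{*(1-j)}$ already yields $2^d$. This is exactly the route the paper takes; your proposal circles around it but never lands on the right pair of subcubes to recurse into.
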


Note that this lemma is also a stronger version of Simon's result. Here we require the lower bound for the minimum degree only for vertices of $G$ in one of the $(n-1)$-subcubes of $Q_n$.

\begin{proof}
The proof is by induction on $n$.
\begin{enumerate}[(a)]
	\item Base case, $n = 1$. Since $G'$ is non-empty, $G' = Q_0$. If $d = 0$, $|V| \geq 1 = 2^0$. If $d = 1$, then $G = Q_n$ and $|V| = 2 = 2^1$.
	\item In the inductive step we prove the statement for $n > 1$. If $Q_{0j} \cap G'$ is empty for some $j \in \{0,1\}$, then $G' \subseteq Q_{0(1-j)}$. Thus by the induction hypothesis $|V(Q_{*(1-j)})| \geq 2^d$. Otherwise both $Q_{00}$ and $Q_{01}$ contain some vertices of $G$. Since each vertex of $Q_{0j} \cap G$ has at most one neighbour in $Q_{0(1-j)} \cap G$, it follows that $\min_{v\in Q_{0j} \cap G}\deg(v, Q_{*j}) \geq d-1$ for any $j \in \{0, 1\}$. By applying the induction hypothesis for $Q_{*j} \cap G$ in the cube $Q_{*j}$ for each $j$, we obtain that $|V| \geq 2 \cdot 2^{d-1} = 2^d$.
\end{enumerate}
\end{proof}

We now have that $\delta(G \cap Q_{01}) \geq d-1$ and $\delta(G \cap Q_{10}) \geq d-1$ becase $Q_{11}$ may contain vertices of $G$ but on the other hand we are assuming $G \cap Q_{00} = \varnothing$. Now we distinguish two cases:

\renewcommand{\labelenumi}{\textbf{\arabic{enumi}.}} 
\renewcommand{\labelenumii}{\textbf{\arabic{enumi}.\arabic{enumii}.}}
\renewcommand{\labelenumiii}{\textbf{\arabic{enumi}.\arabic{enumii}.\arabic{enumiii}.}}

\begin{enumerate}

\item $|V(G \cap Q_{01})| \neq 2^{d-1}$ and $|V(G \cap Q_{10})| \neq 2^{d-1}$.

Cube $Q_{01}$ has $n-2$ dimensions and $\delta(Q_{01} \cap G) \geq d-1$. By Lemma \ref{thm:minsize}
\begin{equation}
|V(Q_{01} \cap G)| \geq \min_{i=(d-1)+1}^{n-2} S(i, d-1) = \min_{i=d}^{n-2} S(i, d-1).
\end{equation}
It follows by induction that
\begin{equation}
|V(Q_{01} \cap G)| \geq  \min_{i=d}^{n-2} \left\lceil 2^{(d-1)+1} - 2^{2(d-1)-i} \right\rceil.
\end{equation}
The minimum is achieved when $i$ is the smallest, $i = d$. Thus $|V(Q_{01} \cap G)| \geq \left\lceil 2^d - 2^{d-2} \right\rceil$. Similarly we prove that $|V(Q_{10} \cap G)| \geq \left\lceil 2^d - 2^{d-2} \right\rceil$.

\bigskip

It remains to estimate the number of vertices of $G$ in $Q_{11}$. We deal with two cases:
\begin{enumerate}
\item Some $(n-3)$-subcube of $Q_n$ in $Q_{11}$ does not contain vertices of $G$. WLOG we assume it is $Q_{110}$, i.e., $G \cap Q_{110} = \varnothing$. We again distinguish two cases:
\begin{enumerate}
\item One of the subcubes $Q_{010}$ and $Q_{100}$ does not contain vertices of $G$. WLOG assume it is $Q_{010}$, i.e., $G \cap Q_{010} = \varnothing$. Then for the subcube $Q_{011}$ it holds that $\min_{v\in G \cap Q_{011}}\deg(v, G \cap Q_{*11}) \geq d$, since $G \cap Q_{001} = \varnothing$ (because $Q_{001} \subset Q_{00}$), $G \cap Q_{010} = \varnothing$ and $Q_{111}$ may contain vertices of $G$. Applying Lemma \ref{thm:extended} to $G \cap Q_{011}$ in $Q_{*11}$, we get $|V(G \cap Q_{*11})| \geq 2^d$. Similarly we prove that $|V(G \cap Q_{10*})| \geq 2^d$. That gives us
\begin{equation}
|V| \geq 2 \cdot 2^d = 2^{d+1} \geq \left\lceil 2^{d+1} - 2^{2d-n} \right\rceil
\end{equation}
and the case is done.

\item Both of the subcubes $Q_{010}$ and $Q_{100}$ contain vertices of $G$. Then for the subcube $Q_{010}$ it holds that $\min_{v\in G \cap Q_{010}}\deg(v, G \cap Q_{01*}) \geq d$, since $G \cap Q_{000} = \varnothing$, $G \cap Q_{110} = \varnothing$, and $Q_{011}$ may contain vertices of $G$. Applying Lemma \ref{thm:extended} to $G \cap Q_{010}$ in $Q_{01*}$, we get $|V(G \cap Q_{01*})| \geq 2^d$. Similarly we prove that $|V(G \cap Q_{10*})| \geq 2^d$. That gives us
\begin{equation}
|V| \geq 2 \cdot 2^d = 2^{d+1} \geq \left\lceil 2^{d+1} - 2^{2d-n} \right\rceil
\end{equation}
and this case also is done.
\end{enumerate}
\item Each $(n-3)$-subcube of $Q_n$ in $Q_{11}$ contains vertices of $G$. Since $Q_{11}$ is adjacent to $Q_{01}$ and $Q_{10}$, $\delta(G \cap Q_{11}) \geq d-2$. From the inductive assumption it follows that 
\begin{equation}
|V(G \cap Q_{11})| \geq 2^{(d-2)+1} - 2^{2(d-2)-(n-2)} = 2^{d-1} - 2^{2d-n-2}.
\end{equation} Thus
\begin{align}
|V| &= |V(G \cap Q_{01})| + |V(G \cap Q_{10})| + |V(G \cap Q_{11})| \geq \\
& \geq 2 \cdot \left\lceil2^d - 2^{d-2}\right\rceil + \left\lceil2^{d-1} - 2^{2d-n-2}\right\rceil \geq \\
&\geq \left\lceil2 \cdot \left(2^d - 2^{d-2}\right) + 2^{d-1} - 2^{2d-n-2}\right\rceil = \\
&= \left\lceil 2^{d+1} - 2^{d-1} + 2^{d-1} - 2^{2d-n-2}\right\rceil = \\
&= \left\lceil 2^{d+1} - 2^{2d-n-2}\right\rceil \geq \\
&\geq \left\lceil 2^{d+1} - 2^{2d-n}\right\rceil.
\end{align}
Hence this case is complete.
\end{enumerate}

\item $|V(G \cap Q_{01})| = 2^{d-1}$ or $|V(G \cap Q_{10})| = 2^{d-1}$. WLOG assume that this holds for $Q_{01}$.

By Theorem \ref{thm:simon} it follows that $G \cap Q_{01}$ is a $(d-1)$-dimensional subcube of $Q_n$, denote it by $D_0$. On the other hand, we are assuming $G \cap Q_{00} = \varnothing$. Thus WLOG we can assume that $D_0$ is induced on the set of vertices
\begin{equation}
\{ x \mid x_1 = 0, \forall i \in [2;n-d+1] \, (x_i = 1) \} = V(G \cap Q_0).
\end{equation}
Observe that $\deg(v,G \cap Q_{01}) = d-1$ for all $v \in G \cap Q_{01}$. Since $\delta(G) = d$, each $x \in V(G \cap Q_{01})$ has $x^{(1)}$ as a neighbour in $G$. Then $\{ x^{(1)} \mid x \in V(G \cap Q_{01}) \} \subseteq V(G \cap Q_{11})$, and $G \cap Q_{11}$ contains a $(d-1)$-subcube of $Q_n$ adjacent to $D_0$. We denote it by $D_1$, with
\begin{equation}
\{ x \mid x_1 = 1, \forall i \in [2;n-d+1] \, (x_i = 1) \} \subseteq V(G \cap Q_1).
\end{equation}
Then $D = D_0 \cup D_1$ is a $d$-dimensional subcube.

It remains to estimate the number of vertices of $G$ in $Q_1$ that do not belong to $D_1$, denote it by $R = |V((G \cap Q_1) \setminus D_1)|$. We will prove the following claim:
\begin{claim}
By $k$ denote the co-dimension of $D_1$ in $Q_1$, which is $(n-1)-(d-1) = n-d$. Then $R \geq 2^d - 2^{d - k}$.
\end{claim}

\begin{proof}
We will denote the subcube of $Q_1$ obtained by restricting some $t$ bits $x_{i_1}=b_1$, $\ldots$, $x_{i_t}=b_t$ by $Q_1(x_{i_1}=b_1,\ldots,x_{i_t}=b_t)$. Further note that $D_1 \subseteq Q_1(x_i=1,x_j=1)$ for $i, j \in [2;k+1]$. 

Since $G \cap Q_0 = D_0$, any vertex of $(G \cap Q_1) \setminus D_1$ can have a neighbour in $G$ only in $Q_1$. Thus we have that
\begin{equation}
\min_{v \in (G \cap Q_1) \setminus D_1} \deg(v, G \cap Q_1) \geq d.
\end{equation}

Pick any $i \in [2;k+1]$. Examine the $(n-2)$-subcube $Q_1(x_i = 0)$. It does not overlap with $D$. But $G$ is irreducible, so $G \cap Q_1(x_i = 0) \neq \varnothing$.

Assume $k = 1$. Then $D_1 = Q_{11}$ and $\delta(G \cap Q_{10}) = d-1$. By Theorem \ref{thm:simon}, it follows that
\begin{equation}
R = |V(G \cap Q_{10})| \geq 2^{d-1} = 2^d - 2^{d-1}.
\end{equation}

Otherwise $k \geq 2$. We will prove it can be assumed that for any $i, j \in [2;k+1]$, $i \neq j$ and $b \in \{0,1\}$ we have $G \cap Q_1(x_i=0,x_j=b) \neq \varnothing$.

\begin{itemize}
\item Let $G \cap Q_1(x_i = 0, x_j = 0) = \varnothing$. Then $\delta(G \cap Q_1(x_i = 0, x_j = 1)) \geq d-1$ and $\delta(G \cap Q_1(x_i = 1, x_j = 0)) \geq d-1$. By Theorem \ref{thm:simon}, we have $|V(G \cap Q_1(x_i = 0, x_j = 1))| \geq 2^{d-1}$ and $|V(G \cap Q_1(x_i = 1, x_j = 0))| \geq 2^{d-1}$. Thus in this case
\begin{align}
R \geq 2\cdot2^{d-1} = 2^d > 2^d - 2^{d-k}.
\end{align}

\item Let $G \cap Q_1(x_i = 0, x_j = 1) = \varnothing$. Then $\min_{v \in G \cap Q_1(x_i = 0, x_j = 0)} \deg(v,G\cap Q_1(x_j=0)) \geq d$ and since $G \cap Q_1(x_j=0) \neq \varnothing$, by Lemma \ref{thm:extended} we have
\begin{equation}
R > |V(G \cap Q_1(x_j = 0))| \geq 2^d > 2^d - 2^{d-k}.
\end{equation}
\end{itemize}

Now examine a subcube $G \cap Q_1(x_i = 0)$ for an $i \in [2;k+1]$. Since $G \cap Q_0(x_i = 0) = \varnothing$, we have $\delta(G \cap Q_1(x_i = 0)) \geq d-1$. By Lemma \ref{thm:minsize}, either $|V(G \cap Q_1(x_i = 0))| = 2^{d-1}$ or $|V(G \cap Q_1(x_i = 0))| \geq \min_{t = d}^{n-1} S(t,d-1)$.

\begin{itemize}
\item Assume it is the latter case; by the induction of this section, we have that the minimum is achieved by $t = d$ with $|V(G \cap Q_1(x_i = 0))| \geq 2^d - 2^{2(d-1)-d} = 2^d - 2^{d-2}$.

We can now assume that $G \cap Q_1(x_i = 1, x_j = 0) \neq \varnothing$, for $j \in [2;k+1]$, $i \neq j$. Since $G \cap Q_0(x_i = 1, x_j = 0) = \varnothing$, we have $\delta(G \cap Q_1(x_i = 1, x_j = 0)) \geq d-2$ and by Theorem \ref{thm:simon} we have $|V(G \cap Q_1(x_i = 1, x_j = 0))| \geq 2^{d-2}$. Thus
\begin{align}
R &\geq |V(G \cap Q_1(x_i = 0))| + |V(G \cap Q_1(x_i = 1, x_j = 0))| \geq \\
&\geq (2^d - 2^{d-2}) + 2^{d-2} = 2^d > 2^d - 2^{d-k}.
\end{align}

\item Otherwise it is the former case in Lemma \ref{thm:minsize} for each $i$, $|V(G \cap Q_1(x_i = 0))| = 2^{d-1}$. By Theorem \ref{thm:simon}, $G \cap Q_1(x_i = 0)$ must be a $(d-1)$-subcube of $Q_n$.

Pick $i_1, i_2 \in [2;k+1]$, $i_1 \neq i_2$. We have that $Q_a = G \cap Q_1(x_{i_1}= 0)$ and $Q_b = G \cap Q_1(x_{i_2}= 0)$ are both $(d-1)$-subcubes. We can now assume that $G \cap Q_1(x_{i_1} = 0, x_{i_2} = b) \neq \varnothing$, for $b \in \{0,1\}$. This means that the $i_2$-th bit is not fixed for the subcube $Q_a$. Thus $G \cap Q_1(x_{i_1}= 0,x_{i_2}=0)$ is a $(d-2)$-subcube. Hence $Q_a$ and $Q_b$ overlap exactly in a $(d-2)$-subcube.

Examine $Q_a \cap Q_b$. Two of its fixed bits are the $i_1$-th and the $i_2$-th, which are distinct positions. Thus it has $n-(d-2)-2=n-d$ fixed positions not in $[2;k+1]$. Let the $d$-subcube defined by these restrictions be $C$. As $Q_a$ and $Q_b$ are both $(d-1)$-subcubes, they must share these $n-d$ fixed positions. As this applies for any $i_1 \neq i_2$, we have that $G \cap Q_1(x_i = 0) \subset C$ for any $i \in [2;k+1]$.

We show that $C \subset G$. Pick $x \in C$. Suppose for some $i \in [2;k+1]$, we have $x_i = 0$. Then $x \in G \cap Q_1(x_i = 0)$. Otherwise we have $x_i = 1$ for each $i \in [2;k+1]$. But then $x \in D$.\footnote{In this case, we have obtained that $G$ is a union of two $d$-dimensional subcubes $D$ and $C$, such that each bit position is fixed in at most one of them. This is essentially the same construction as given in subsection \ref{sec:sufficiency}.}

Examine the intersection of $D$ and $C$. Each position of $[2;k+1]$ is fixed in $D$ ($k$ positions). On the other hand, $n-d$ more positions not in $[2;k+1]$ are fixed in $C$. Thus their intersection is a $(d-k)$-subcube, and $R = 2^d - 2^{d-k}$.\hfil\qedhere
\end{itemize}
\end{proof}

Since $k = n-d$, we have $R \geq 2^d-2^{2d-n}$. Ultimately we get
\begin{align}
|V| &= |V(D)| + R \geq 2^d + (2^d - 2^{2d-n}) = 2^{d+1} - 2^{2d-n}.
\end{align}
This completes the proof of Theorem \ref{thm:main}.
\qed
\end{enumerate}

\subsection{Application for Boolean Functions}

Theorem \ref{thm:main} implies:

\begin{repcorollary}{thm:result}
Let $f(x)$ be a Boolean function on $n$ variables. If $\forall i \in [n] \, \forall b \in \{0, 1\} \, \exists x \, (x_i = b, f(x) = 1)$, then
\begin{equation}
| \{ x \mid f(x) = 1 \} | \geq 2^{n-s_1(f)+1} - 2^{n-2s_1(f)}.
\end{equation}
\end{repcorollary}

\begin{proof}
Let $G$ be a subgraph of $Q_n$ induced by the set of vertices $V = \{ x \mid f(x) = 1 \}$. The minimum degree of $G$ is $\delta(G) = n - s_1(f)$. The given constraint means that $G$ is irreducible. Then, by Theorem \ref{thm:main}, 
\begin{equation}
|V| \geq 2^{(n-s_1(f))+1} - 2^{2(n-s_1(f))-n} = 2^{n-s_1(f)+1} - 2^{n-2s_1(f)}.
\end{equation}
\end{proof}

\subsection{Generalization of Simon's Lemma}

We use Theorem \ref{thm:main} and Lemma \ref{thm:minsize} to prove Theorem \ref{thm:stronger}, which is a stronger version of Simon's lemma (Theorem \ref{thm:simon}):

\begin{reptheorem}{thm:stronger}
Let $G = (V, E)$ be a non-empty subgraph of $Q_n$ ($n \geq 0$) with $\delta(G) = d$. Then either $|V| = 2^d$ or $|V| \geq \frac 3 2 \cdot 2^d$, with $V=|2^d|$ achieved if and only if $G$ is a $d$-subcube.
\end{reptheorem}

\begin{proof}
By Theorem \ref{thm:main} we may substitute $\left\lceil 2^{d+1} - 2^{2d-n}\right\rceil$ instead of $S(n,d)$ in Lemma \ref{thm:minsize}. Then in 
\begin{equation}
\min_{i=d+1}^n S(i,d) = \min_{i=d+1}^n \left\lceil 2^{d+1} - 2^{2d-i}\right\rceil
\end{equation}
the minimum is obtained for $i = d+1$. Thus either $|V| = 2^d$ or $|V| \geq 3 \cdot 2^{d-1}$.
\end{proof}

\section{Conclusion}

In this paper, we have shown two results on the structure of low sensitivity subsets of Boolean hypercube:
\begin{itemize}
\item
Theorem \ref{thm:main}: a tight lower bound on the size of irreducible low sensitivity sets 
$S\subseteq \{0, 1\}^n$, that is, sets $S$ that are not contained in any subcube of $\{0, 1\}^n$ obtained by fixing one or more variables $x_i$;
\item
Theorem \ref{thm:stronger}: a gap theorem that shows that $S\subseteq\{0, 1\}^n$ of sensitivity $s$ must either have $|S|=2^{n-s}$ or $|S|\geq \frac{3}{2} 2^{n-s}$.
\end{itemize}
The gap theorem follows from the first result by classifying $S\subseteq\{0, 1\}^n$ into irreducible sets and sets that are constructed from irreducible 
subsets $S'\subseteq \{0, 1\}^{n-k}$ for some $k\in\{1, 2, \ldots, s\}$ and then using the first result for each of those categories. We find this gap theorem quite surprising.

Both results contribute to understanding the structure of low-sensitivity subsets of the Boolean hypercube. 
After this paper was completed, 
we have used the gap theorem to obtain a new upper bound on block sensitivity in terms of sensitivity:
\begin{equation}
bs(f) \leq \max\left(2^{s(f)-1}\left(s(f)-\frac 1 3\right), s(f) \right).
\end{equation}
We report this result in \cite{Ambainis_Prusis_Vihrovs}.

\bibliographystyle{abbrv}
\bibliography{bibliography}

\begin{thebibliography}{10}

\bibitem{A+}
A.~Ambainis, M.~Bavarian, Y.~Gao, J.~Mao, X.~Sun, and S.~Zuo.
\newblock Tighter relations between sensitivity and other complexity measures.
\newblock In J.~Esparza, P.~Fraigniaud, T.~Husfeldt, and E.~Koutsoupias,
  editors, {\em International Colloquium on Automata, Languages, and
  Programming}, volume 8572 of {\em Lecture Notes in Computer Science}, pages
  101--113. Springer Berlin Heidelberg, 2014.

\bibitem{Ambainis_Prusis_Vihrovs}
A.~Ambainis, K.~Pr{\={u}}sis, and J.~Vihrovs.
\newblock Sensitivity versus certificate complexity of {B}oolean functions.
\newblock In S.~A. Kulikov and J.~G. Woeginger, editors, {\em Proceedings of
  the 11th International Computer Science Symposium in {R}ussia}, volume 9691
  of {\em Lecture Notes in Computer Science}, pages 16--28. Springer
  International Publishing, 2016.

\bibitem{Ambainis_Sun_2011}
A.~Ambainis and X.~Sun.
\newblock New separation between $s(f)$ and $bs(f)$.
\newblock {\em CoRR}, abs/1108.3494, 2011.

\bibitem{Beals+_2001}
R.~Beals, H.~Buhrman, R.~Cleve, M.~Mosca, and R.~de~Wolf.
\newblock Quantum lower bounds by polynomials.
\newblock {\em J. ACM}, 48(4):778--797, July 2001.

\bibitem{Buhrman_deWolf_2002}
H.~Buhrman and R.~de~Wolf.
\newblock Complexity measures and decision tree complexity: a survey.
\newblock {\em Theoretical Computer Science}, 288(1):21 -- 43, 2002.
\newblock Complexity and Logic.

\bibitem{green2008boolean}
B.~Green and T.~Sanders.
\newblock Boolean functions with small spectral norm.
\newblock {\em Geometric and Functional Analysis}, 18(1):144--162, 2008.

\bibitem{Kenyon_Kutin_2004}
C.~Kenyon and S.~Kutin.
\newblock Sensitivity, block sensitivity, and $\ell$-block sensitivity of
  {B}oolean functions.
\newblock {\em Information and Computation}, 189(1):43 -- 53, 2004.

\bibitem{Nisan_1991}
N.~Nisan.
\newblock {CREW PRAMS} and decision trees.
\newblock In {\em Proceedings of the Twenty-first Annual ACM Symposium on
  Theory of Computing}, STOC '89, pages 327--335, New York, NY, USA, 1989. ACM.

\bibitem{Nisan_Szegedy_1994}
N.~Nisan and M.~Szegedy.
\newblock On the degree of {B}oolean functions as real polynomials.
\newblock {\em computational complexity}, 4(4):301--313, 1994.

\bibitem{Rubinstein_1995}
D.~Rubinstein.
\newblock Sensitivity vs. block sensitivity of {B}oolean functions.
\newblock {\em Combinatorica}, 15(2):297--299, 1995.

\bibitem{Simon_1983}
H.-U. Simon.
\newblock A tight {$\Omega(\log \log N)$}-bound on the time for parallel
  {RAM}'s to compute nondegenerated {B}oolean functions.
\newblock In {\em Proceedings of the 1983 International FCT-Conference on
  Fundamentals of Computation Theory}, pages 439--444, London, UK, UK, 1983.
  Springer-Verlag.

\bibitem{Virza_2011}
M.~Virza.
\newblock Sensitivity versus block sensitivity of {B}oolean functions.
\newblock {\em Information Processing Letters}, 111(9):433 -- 435, 2011.

\end{thebibliography}

\end{document}